\newcommand{\bra}[1]{{\left\langle{#1}\right\vert}}
\newcommand{\ket}[1]{{\left\vert{#1}\right\rangle}}
\newcommand{\qw}[1][-1]{\ar @{-} [0,#1]}
\newcommand{\qwx}[1][-1]{\ar @{-} [#1,0]}
\newcommand{\gate}[1]{*+<.6em>{#1} \POS ="i","i"+UR;"i"+UL **\dir{-};"i"+DL **\dir{-};"i"+DR **\dir{-};"i"+UR **\dir{-},"i" \qw}
\newcommand{\control}{*!<0em,.025em>-=-<.2em>{\bullet}}
\newcommand{\ctrl}[1]{\control \qwx[#1] \qw}
\newcommand{\rstick}[1]{*!L!<-.5em,0em>=<0em>{#1}}
\newcommand{\lstick}[1]{*!R!<.5em,0em>=<0em>{#1}}
\newcommand{\Qcircuit}{\xymatrix @*=<0em>}
\newcommand{\cclass}[1]{\mathsf{#1}}
\newtheorem{definition}{Definition}
\newtheorem{theorem}{Theorem}
\newtheorem{lemma}{Lemma}
\title{Interactive proofs for BQP via self-tested graph states}
\author{
Matthew McKague \\ 
Jack Dodd Centre for Quantum Technology \\ 
Department of Physics \\
University of Otago \\\
and \\
Centre for Quantum Technologies \\
National University of Singapore \\
\url{matthew.mckague@otago.ac.nz}}
\date{}
\begin{document}
\maketitle
\begin{abstract}
Using the measurement-based quantum computation model, we construct interactive proofs with non-communicating quantum provers and a classical verifier.  Our construction gives interactive proofs for all languages in $\cclass{BQP}$ with a polynomial number of quantum provers, each of which, in the honest case, performs only a \emph{single measurement.}  Our techniques use self-tested graph states which allow us to test the provers for honesty, establishing that they hold onto a particular graph state and measure it in specified bases.  In this extended abstract we give an overview of the construction and proofs.
\end{abstract}

\section{Introduction}
We seek interactive proofs between quantum provers and classical verifiers, both limited to polynomial time calculations.  That is to say, we would like to have a procedure where a classical computer (the ``verifier''), limited to a polynomial number of operations, can query a quantum computer (the ``prover''), also limited to a polynomial number of operations, and tap into its resources in order to perform some computation.  Additionally, if the verifier interacts with a malicious quantum computer it should be able to detect this and abort the calculation.

This problem is interesting for a variety of reasons.  First, as a complexity theoretic question it has obvious value in further developing the theory of how powerful quantum computers are.  From a more practical computing point of view, it would be nice to know whether it would be possible to have cheap classical computers interact with large, and presumably more expensive, quantum ``servers,'' paying for services as required.  Of course the users would like to know that they get their money's worth, and interactive proofs can confirm this.

Another important problem that interactive proofs can address is how to deal with the increasing complexity of quantum experiments.  Current techniques require an exponential amount of classical resources to either compute prediction for experiments, or to characterize quantum devices using tomography.  The crux of the problem is this: we are interested in building quantum computers because we think that they will be exponentially faster than classical computers for some problems, but this very fact will make it impossible for us to classically compute predictions for complex quantum experiments.  Can we still say that quantum mechanics is a falsifiable theory if this were the case  \cite{Aharonov:2012:Is-Quantum-Mech}?  Interactive proofs can give us a way out of this problem: if we can build a quantum computer and design suitable interactive proofs, then we do not need to trust the operation of the quantum prover to accept its results - the interactive proof certifies that the outcome is correct.  We can then safely use the quantum computer to generate predictions for us, for example by simulating the system under study.

For these applications we would like to have interactive proofs for every language that quantum computers can recognize, i.e.\ every language in $\cclass{BQP}$. Clearly the set of languages recognizable by a poly-time classical verifier and poly-time quantum prover lies somewhere between $\cclass{P}$ and $\cclass{BQP}$ since on one hand the verifier can ignore the prover, and on the other hand the verifier and honest prover together form a poly-time quantum machine.  As well, there do exist interactive proofs for all of $\cclass{BQP}$ since $\cclass{BQP} \subseteq \cclass{PSPACE}$ \cite{Bernstein:1997:Quantum-Complex} and $\cclass{PSPACE} = \cclass{IP}$ \cite{Shamir:1992:IP--PSPACE,Lund:1990:Algebraic-metho}, but the known constructions require the prover to solve $\cclass{PSPACE}$-complete problems.  Constructions for particular problems are known (\cite{McKague:2010:Interactive-pro} for example) and of course anything in $\cclass{NP}$ has a trivial interactive proof where the prover simply sends the verifier a certificate.

Current techniques 
\cite{
	Acin:2007:Device-Independ,
	Bardyn:2009:Device-independ,
	Magniez:2006:Self-testing-of,
	Mayers:2004:Self-testing-qu,
	Mayers:1998:Quantum-Cryptog,
	McKague:2012:Robust-self-tes,
	McKague:2010:Self-testing-gr,
	Miller:2012:Optimalrobustquantum,
	Pironio:2009:Device-independ,
	Pironio:2010:Random-numbers-,
	Reichardt:2012:A-classical-lea}
for probing the behaviour of adversarial quantum systems all rely on entanglement, and hence in order to make use of them we must use at least two non-communicating provers to verify entanglement.  Reichardt et al.\ \cite{Reichardt:2012:A-classical-lea} considered the case of two provers.  Here we will consider the case of a polynomial number of provers, but each limited to a \emph{single operation}, and show that we can recognize all languages in $\cclass{BQP}$ with this model.

In this extended abstract we give an overview of the constructions and proofs.  For additional details and a full error analysis we refer the reader to \cite{McKague:2013:Interactive-proofs-for-BQP-via-self-tested-graph-states}.  

\vspace{0.2cm}
\noindent
\textbf{Previous work - } 
Self-testing is a concept for testing quantum apparatus solely through classical interaction with minimal assumptions, such as non-signalling.  It was introduced by Mayers and Yao \cite{Mayers:1998:Quantum-Cryptog, Mayers:2004:Self-testing-qu} who gave a self-test for maximally entangled pairs of qubits and certain measurements. Meanwhile, van Dam et al.\ \cite{van-Dam:1999:Self-Testing-of} considered testing gates in the context of known Hilbert space dimension.  Magniez et al.\ \cite{Magniez:2006:Self-testing-of} combined the two approaches, allowing testing of entire quantum circuits.  Further development of these techniques appear in \cite{Bardyn:2009:Device-independ,McKague:2010:Quantum-Informa, McKague:2011:Generalized-Sel,McKague:2012:Robust-self-tes,Miller:2012:Optimalrobustquantum}.  Self-testing of graph states, critical for our application, appears in \cite{McKague:2010:Self-testing-gr}.

These works all require additional assumptions.  In particular, they assume that devices can be used repeatedly in an independent and identical manner in order to gather necessary statistics.  As well, \cite{Magniez:2006:Self-testing-of} assumes that certain states are in a product form.  

Stemming from a different heritage, Broadbent et al.\ \cite{Broadbent:2008:Universal-blind} considered a semi-quantum verifier who only prepares single qubit states, and a fully quantum prover.  They give a construction for an interactive proof for any language in $\cclass{BQP}$.  Additionally, they describe (without rigorous proof) an extension using two quantum provers and a classical verifier.  Aharonov et al.\ \cite{Aharonov:2008:Interactive-Pro} also describe a semi-quantum protocol using a constant sized quantum verifier and a polynomial-time quantum prover.

Most recently, Reichardt et al.\ \cite{Reichardt:2012:A-classical-lea} proved a very general result showing that two non-communicating quantum provers along with a classical verifier can recognize all languages in $\cclass{BQP}$.  The core of their result is a self-test, using only two provers, for multiple EPR pairs and measurements. Using this tool they show how to test individual gates and perform unitaries via teleportation.  Finally, they combine the results to give an interactive proof for entire quantum circuits.

\vspace{0.2cm}
\noindent
\textbf{Contributions - } 
Our main contribution is to prove the following theorem:

\begin{theorem}
\label{theorem:bqpinteractiveproofs}
For every language $L \in \cclass{BQP}$ and input $x$ there exists a poly$(|x|)$-time verifier $V$ which interacts with a poly$(|x|)$ number of non-communicating quantum provers such that
\begin{itemize}
	\item If $x \in L$ then there exists\footnote{
			The honest provers and the verifier are, of course, members of a uniform set, i.e.\ a description of the verifier and provers can be generated by a polynomial-time Turing machine.
		} 
		a set of honest quantum provers, each of which performs a single one-qubit measurement, for which $V$ accepts with probability at least $c=\sfrac{2}{3}$.
	\item If $x \notin L$ then, for any set of quantum provers, $V$ accepts with probability no more than $s=\sfrac{1}{3}$.
\end{itemize}
\end{theorem}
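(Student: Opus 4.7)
The plan is to combine measurement-based quantum computation (MBQC) on graph states with the graph-state self-test of \cite{McKague:2010:Self-testing-gr}. Given a BQP circuit $C$ deciding $L$ on input $x$, first compile $C$ into an MBQC pattern on a graph state $\ket{G}$ of polynomial size in which every qubit is measured in a single-qubit basis specified by the verifier (adaptivity, if needed, is handled on the verifier's side: the verifier sends each prover its angle only when it is that prover's turn, so each prover is still used only once).

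The honest protocol assigns the $n = |V(G)| = \mathrm{poly}(|x|)$ qubits of $\ket{G}$ to $n$ separate provers, asks each for its one prescribed measurement, and decodes the outcome of $C$ from the returned bits, accepting with probability at least $c = 2/3$ by correctness of $C$.

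To force a dishonest coalition of provers to behave consistently with this picture, the verifier interleaves many independent ``blocks'' of $n$ provers each, randomly choosing for each block whether to run (i) a \emph{computation round} using the MBQC pattern, or (ii) a \emph{test round} using settings drawn from the self-test of \cite{McKague:2010:Self-testing-gr}. Because every prover performs only one measurement, cannot communicate, and cannot see its own block's role, any strategy that passes the test rounds with high probability must, by the robustness of the self-test, implement (under a local isometry) a state and measurements close to the honest graph-state protocol. Transferring this closeness to the computation rounds yields an outcome distribution close to the honest one, so a verifier that cross-checks test statistics against computation outcomes obtains soundness $s \leq 1/3$ after suitable parameter tuning and, if necessary, parallel repetition for gap amplification.

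The main obstacle I expect is bridging the gap between the self-test's guarantee---statistical and averaged over its prescribed distribution of settings---and the requirement that the provers behave correctly on the one specific MBQC setting used for the computation. Addressing this will require either embedding the MBQC setting into the support of the self-test distribution, or exploiting the indistinguishability of blocks via a symmetry or sampling argument, and then carefully propagating the resulting approximation errors through the MBQC computation so that the constant $2/3$--$1/3$ completeness--soundness gap survives. A secondary, more routine, point is arranging that the verifier's adaptive choice of angles does not leak information across blocks that would let the provers distinguish computation rounds from test rounds.
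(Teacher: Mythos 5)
Your overall architecture matches the paper's: compile the circuit into an MBQC pattern on a polynomial-size graph state, give one qubit to each of $n$ provers, randomly interleave computation rounds with self-test rounds so that no individual prover (seeing only a single setting in $\{X,Z,\frac{X\pm Z}{\sqrt{2}}\}$) can tell which round it is in, and amplify the gap by repetition. The obstacle you flag at the end is real, and the resolution is the one you half-guess: the test settings are chosen so that each prover's view in a test round is literally one of the settings it could see in a computation round, so its (fixed, non-communicating) response operators are the same objects in both cases and the self-test conclusions transfer directly.

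The genuine gap is that you invoke the self-test of \cite{McKague:2010:Self-testing-gr} as a black box, and that theorem is not strong enough to make this work. Its robustness bound scales as $O(2^{n/2})\epsilon^{1/2}$ in the number of qubits $n$, so to certify the state to constant accuracy you would need the statistical error $\epsilon$ in the test to be exponentially small in $n=\mathrm{poly}(|x|)$, which cannot be achieved with polynomially many provers or rounds; the completeness--soundness gap would collapse and no polynomial amount of repetition recovers a constant gap. The paper's main technical contribution is precisely to redo the graph-state self-test with error $O(\sqrt{n})\,\epsilon^{1/4}$, which is what makes a $1/\mathrm{poly}(|x|)$ pre-amplification gap attainable. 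Two further pieces are missing from the cited self-test and must be added: it only certifies the graph state together with $X$ and $Z$ measurements, whereas universal MBQC on the (triangular) cluster state also needs the $\frac{X\pm Z}{\sqrt{2}}$ observables, so the test must include extra correlators (e.g.\ of the form $(\frac{X\pm Z}{\sqrt 2})_v Z^{\mathbf{A}1_v}$ and its companion using a fixed neighbour $u$) and a separate argument that the primed operators act as $\frac{X\pm Z}{\sqrt{2}}$ on the extracted logical qubits; and the error analysis must be carried through for adaptive measurement settings, since the angle sent to prover $v$ depends on earlier outcomes. Without these three ingredients the step ``transferring this closeness to the computation rounds'' does not go through with the claimed constants.
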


The proof of this result uses two main pieces.  Using self-testing, we verify that the provers are using low-noise resources sufficient for our computation (i.e.\ graphs states and certain single-qubit measurements).  We then combine this with measurement-based quantum computation in order to construct an interactive proof.

The bulk of our contribution lies in improvements to self-testing.  We modify the proof for the graph state self-test from \cite{McKague:2010:Self-testing-gr}, allowing a tighter error analysis.  For graphs on $n$ vertices the error in the state is upper bounded by $O(\sqrt{n})\epsilon^{\frac{1}{4}}$ (where $\epsilon$ bounds the noise in the experimental outcomes) rather than $O(2^{\frac{n}{2}}) \epsilon^{\frac{1}{2}}$ as in \cite{McKague:2010:Self-testing-gr}.  This exponential improvement in the error scaling in $n$ makes it possible to self-test with a polynomial number of trials to achieve a constant error.  We also analyse the error in the case of adaptive measurements, which are required for measurement-based quantum computing.  Additionally we extend the graph state test to $X$-$Z$ plane measurements  in order to achieve universal computation.

Compared with the result of Reichardt et al.\ \cite{Reichardt:2012:A-classical-lea} our contribution is to provide a different construction with different underlying computational model, that of measurement-based quantum computation.  While they use a constant number of provers, each of which runs in polynomial time, we use a polynomial number of provers, each of which runs in constant time.  Indeed, each prover only performs a single one-qubit measurement.  Our construction also has the advantage that, since only measurements are used, there is no need for additional tests for gates.  As well, the provers are very easy to implement, requiring only the ability to measure in four different bases (once an appropriate graph state is prepared).  

Finally, our construction has a very nice property, which is that \emph{the measurement-based calculation that is performed is exactly what would be done with trusted devices}, whereas the Reichardt et al.\ construction requires qubits to be teleported between the two provers at each gate. This means that an experimentalist can build a quantum computer according to the trusted model, and interactive proofs can be added solely by modifying the classical control\footnote{This is assuming that the experimentalist provides all the required measurement bases, $\{X,Z, \frac{X\pm Z}{\sqrt{2}}\}$, which are required for universal computation anyway.}. 

\section{Construction of the interactive proof protocol}
Let us begin with a language $L \in \cclass{BQP}$ and an input $x$.  Then there exists some procedure CALCULATE$(L, x)$ that the quantum provers can perform which determines whether $x \in L$ and is wrong at most $\sfrac{1}{3}$ of the time.  However since we do not trust the provers, we have to check somehow whether they are honest.
\begin{definition}
A set of provers are said to be \emph{honest} if they follow the procedure CALCULATE$(L, x)$ exactly. If $x \notin L$ then a set of provers is called \emph{dishonest} if the probability of the provers accepting on CALCULATE$(L, x)$ is more than $\sfrac{1}{2}$.  \end{definition}
Note that this definition is asymmetric since dishonest provers only come into play when we consider soundness (the probability of a false positive.)  Also, there can be provers which are neither honest nor dishonest, which we can think of noisy honest provers.

Now suppose we have a test for honesty which rejects dishonest provers.  If we apply this test and find that the provers are indeed honest, we can ask them to run the calculation and accept whenever they say to accept.  If the provers are dishonest then we immediately reject.  Of course we must be more careful then this since dishonest provers could pretend to be honest up until we perform the calculation, and then give an incorrect answer.  To get around this, we structure our test for honesty so that an individual prover cannot distinguish the calculation from the test\footnote{To be more specific, an individual prover should not have any procedure, better than just guessing, which positively identifies that the calculation is taking place.   It is, however, acceptable if an individual prover can tell that we are performing TEST$(L,x)$ since this information only motivates it to perform honestly!}.  We then randomly choose whether to test or calculate.  The overall protocol is given in algorithm~\ref{algorithm:interactiveproof}.

\begin{algorithm}
\caption{INTERACTIVEPROOF$(x)$}
\label{algorithm:interactiveproof}

\begin{algorithmic}
	\State Randomly choose CALCULATE$(L, x)$ with probability $q$ or TEST with probability $1-q$
	\If {CALCULATE} 
		\State
		\Return CALCULATE$(L, x)$
	\Else
		\State
		\Return TEST$(L, x)$
	\EndIf
\end{algorithmic}
\end{algorithm}

In order for our procedure to work, we need to verify some properties:
\begin{itemize}
	\item The procedure CALCULATE$(L, x)$ must be in a form we can test.			
	\item The procedure TEST$(L, x)$ must accept honest provers with a higher probability than it accepts dishonest provers, and the gap between these two probabilities must be at least $\sfrac{1}{\text{poly}(|x|)}$ so that we can distinguish between them with a polynomial number of trials.
	\item An individual prover must not be able to distinguish, based on its inputs, when we are performing procedure CALCULATE$(L,x)$.
\end{itemize}
Assuming these two conditions hold, we can prove the following lemma.

\begin{lemma}
\label{lemma:interactiveproofoneshot}  Let $L$ be a language and $x$ in input.  Then there exists some $1 \geq q \geq 0$ 
and some $1 \geq c_{ip} > s_{ip} \geq 0$ such that
\begin{itemize}
	\item If $x \in L$ then for honest provers procedure~\ref{algorithm:interactiveproof} accepts with probability at least $c_{ip}$
	\item If $x \notin L$ then for any set of provers procedure~\ref{algorithm:interactiveproof} accepts with probability at most $s_{ip}$
\end{itemize}
and
\begin{equation}
	c_{ip} - s_{ip} \geq 
	\frac{1
	}{
		poly(|x|)
	}
\end{equation}
\end{lemma}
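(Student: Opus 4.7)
The plan is to treat the acceptance probability of Procedure~\ref{algorithm:interactiveproof} as the convex combination $q \cdot P_{\text{CALC}} + (1-q) \cdot P_{\text{TEST}}$, where $P_{\text{CALC}}$ and $P_{\text{TEST}}$ are the acceptance probabilities of the two subroutines under whatever fixed prover strategy is in play. By the third bullet before the lemma, no individual prover can tell locally whether the verifier is running CALCULATE or TEST, so its behaviour on CALCULATE inputs is a well-defined quantity determined by the prover strategy alone; in particular it cannot condition on the global coin flip made by the verifier. This ensures $P_{\text{CALC}}$ and $P_{\text{TEST}}$ decompose cleanly in each case below, and the whole argument reduces to choosing $q$.

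For completeness, with $x \in L$ and honest provers, the first bullet gives $P_{\text{CALC}} \geq \sfrac{2}{3}$ (since CALCULATE is the underlying $\cclass{BQP}$ procedure), and the second bullet gives $P_{\text{TEST}} \geq p_h$ for some $p_h$ that the construction of TEST can drive arbitrarily close to $1$; this yields $c_{ip} \geq q \cdot \sfrac{2}{3} + (1-q) p_h$. For soundness with $x \notin L$, I would split on whether the provers are dishonest. If they are dishonest then $P_{\text{CALC}} \leq 1$ trivially while $P_{\text{TEST}} \leq p_d$ with $p_h - p_d \geq \delta$ for some $\delta \geq 1/\mathrm{poly}(|x|)$, giving acceptance at most $q + (1-q) p_d$. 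If they are not dishonest then by definition $P_{\text{CALC}} \leq \sfrac{1}{2}$, giving the bound $\sfrac{q}{2} + (1-q)$. Taking $s_{ip}$ as the larger of these, a direct subtraction gives
\begin{equation*}
c_{ip} - s_{ip} \;\geq\; \min\!\bigl\{ (1-q)\delta - \sfrac{q}{3}, \; \sfrac{q}{6} - (1-q)(1 - p_h) \bigr\}.
\end{equation*}

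Choosing $q = \Theta(\delta)$, and arranging via the construction of TEST that $1 - p_h$ is smaller than a constant fraction of $\delta$, makes both expressions $\Omega(\delta) = 1/\mathrm{poly}(|x|)$, which is the required gap. The main obstacle here is conceptual rather than computational: the whole soundness argument rests on the indistinguishability condition, because otherwise a cheating strategy could behave honestly exactly when it detects TEST and dishonestly when it detects CALCULATE, defeating any decomposition of the overall acceptance into $P_{\text{CALC}}$ and $P_{\text{TEST}}$. Justifying indistinguishability is the real work, but it is a property required of the TEST construction, not of this lemma; once it and the other two bullets are granted, the lemma is essentially an exercise in tuning the CALCULATE/TEST mixing weight $q$.
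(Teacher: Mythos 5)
Your overall architecture --- write the acceptance probability as $q P_{\text{CALC}} + (1-q) P_{\text{TEST}}$, split the soundness case on whether the provers are dishonest in the sense of the paper's definition, and then tune $q$ --- is exactly the case analysis and optimization over $q$ that the paper describes. The dishonest branch and the completeness branch are fine. But there is a genuine gap in the non-dishonest branch. You bound $P_{\text{TEST}} \leq 1$ there and then ask the construction of TEST to supply $1 - p_h = o(\delta)$, i.e.\ an honest test-acceptance probability within a sub-$\delta$ margin of $1$. That is not available for this TEST, and cannot be arranged: the settings $\bigl(\frac{X\pm Z}{\sqrt{2}}\bigr)_v Z^{\mathbf{A}1_v}$ have honest expectation only $\sfrac{1}{\sqrt{2}}$, so honest provers return $a=1$ on them with probability about $0.854$, and since a constant fraction of $\mathcal{M}(L,x)$ consists of such settings, $1 - c_{test}$ is a constant. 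Reweighting the settings does not help, because down-weighting the imperfect settings shrinks $\delta = c_{test} - s_{test}$ at least as fast as it shrinks $1 - c_{test}$. Your two constraints, $q = O(\delta)$ from the dishonest branch and $q = \Omega(1-p_h) = \Omega(1)$ from the non-dishonest branch, are then incompatible for large $|x|$, so no choice of $q$ closes your argument.

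The missing ingredient is a better bound on $P_{\text{TEST}}$ for arbitrary quantum provers in the non-dishonest branch: one needs $P_{\text{TEST}} \leq c_{test}$ (or at most $c_{test} + O(q)$), i.e.\ that honest provers essentially \emph{maximize} the acceptance probability of TEST over all quantum strategies. This holds for the test as constructed --- the stabilizer settings are at their algebraic maximum $+1$, and the tilted settings sit at the Tsirelson bound --- but it is an additional fact about TEST that must be invoked; it does not follow from Lemma~\ref{lemma:testproof} as stated. With it, the non-dishonest branch gives acceptance at most $\sfrac{q}{2} + (1-q)c_{test}$, the gap there is simply $\sfrac{q}{6}$ coming from the $\sfrac{2}{3}$ versus $\sfrac{1}{2}$ separation in CALCULATE, and your choice $q = \Theta(\delta)$ then yields an overall gap of $\Omega(\delta) = 1/\mathrm{poly}(|x|)$ exactly as you intend.
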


The proof is a straightforward case analysis and optimization over $q$.  Roughly, the two cases are honest provers versus noisy honest provers, which favours a high $q$ emphasizing the computation, and honest provers versus dishonest provers, which favour a low $q$ emphasizing the test for honesty.

Our gap grows vanishingly small as $|x|$ increases, but we can amplify to a constant gap by repeating the procedure $\text{poly}(|x|)$ times as in Algorithm~\ref{algorithm:gapamplification}.

\begin{algorithm}
\caption{AMPLIFYGAP$(L, x)$}
\label{algorithm:gapamplification}

\begin{algorithmic}
	\State Perform Algorithm~\ref{algorithm:interactiveproof} $N = \text{poly}(|x|)$ times and let the number of times the algorithm accepts be $M$.
	\If {$M > N\frac{c_{ip} - s_{ip}}{2}$} 
		\State
		\Return ACCEPT
	\Else
		\State
		\Return REJECT
	\EndIf
\end{algorithmic}
\end{algorithm}

\begin{lemma}
If $x \in L$ and the provers are honest then Algorithm~\ref{algorithm:gapamplification} accepts with probability at least $\sfrac{2}{3}$ and if $x \notin L$ then it accepts with probability at most $\sfrac{1}{3}$.
\end{lemma}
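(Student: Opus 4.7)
The plan is a standard probability-amplification argument via concentration of measure. Let $X_i \in \{0,1\}$ denote the indicator that the $i$-th invocation of Algorithm~\ref{algorithm:interactiveproof} accepts, so that $M = \sum_{i=1}^{N} X_i$. If $x \in L$ and the provers are honest, Lemma~\ref{lemma:interactiveproofoneshot} gives $\Pr[X_i = 1] \geq c_{ip}$, and since honest provers are memoryless across invocations the $X_i$ are i.i.d. If $x \notin L$, then for any (possibly entangled, possibly adaptive) set of provers the conditional probability $\Pr[X_i = 1 \mid X_1,\dots,X_{i-1}] \leq s_{ip}$; otherwise, restricting attention to the worst past transcript would produce a single-round prover strategy violating the bound of Lemma~\ref{lemma:interactiveproofoneshot}.

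Writing $\Delta := c_{ip} - s_{ip} \geq 1/\text{poly}(|x|)$, I would take the decision threshold $T$ in Algorithm~\ref{algorithm:gapamplification} to sit at the midpoint between $s_{ip}$ and $c_{ip}$, so that both $c_{ip} - T$ and $T - s_{ip}$ equal $\Delta/2$. Applying Hoeffding's inequality in the completeness case and Azuma--Hoeffding to the martingale $M - \sum_{i} \Pr[X_i = 1 \mid X_1,\dots,X_{i-1}]$ in the soundness case yields
\begin{equation}
\Pr[M \leq N T \mid x \in L] \leq \exp\!\left(-2 N (c_{ip} - T)^2\right) = \exp(-N \Delta^2 / 2),
\end{equation}
\begin{equation}
\Pr[M > N T \mid x \notin L] \leq \exp\!\left(-2 N (T - s_{ip})^2\right) = \exp(-N \Delta^2 / 2).
\end{equation}
Choosing $N = \lceil (2 \ln 3)/\Delta^2 \rceil$, which is $\text{poly}(|x|)$ because $\Delta$ is $1/\text{poly}(|x|)$, makes each tail probability at most $1/3$, establishing both halves of the lemma.

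The only real subtlety --- and thus the main potential obstacle --- is the soundness case: because the provers are quantum systems that may share entanglement across the $N$ invocations and condition their behaviour in invocation $i$ on past measurement outcomes, the Bernoullis $X_i$ need not be independent. This is handled by passing to the natural filtration generated by the transcripts of the first $i-1$ rounds and using the martingale (conditional) form of Hoeffding's inequality, exactly as indicated above. Everything else is the textbook amplification of a $1/\text{poly}$ gap into a constant one.
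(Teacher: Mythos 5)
Your proof is correct and takes essentially the same route the paper intends --- the paper's own justification is a one-line appeal to Hoeffding's inequality, which you simply flesh out, and your martingale (Azuma--Hoeffding) treatment of the soundness case properly handles the cross-round correlations that the paper leaves implicit (even with a fresh set of provers per repetition, shared entanglement could correlate the $X_i$, so the conditional bound $\Pr[X_i=1\mid X_1,\dots,X_{i-1}]\leq s_{ip}$ is the right thing to argue). Note also that you quietly correct the threshold: Algorithm~\ref{algorithm:gapamplification} as printed compares $M$ to $N\frac{c_{ip}-s_{ip}}{2}$, which must be a typo for the midpoint $N\frac{c_{ip}+s_{ip}}{2}$ that your argument (correctly) uses.
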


This is a standard result, which can be shown via a straightforward application of Hoeffding's inequality.  The repetition can be done either serially, or in parallel by adding a new set of provers for each repetition.  In the latter case each prover is queried at most once.

Our remaining tasks are to show that a suitable calculation and test for honesty exist.

\section{The Calculation}
We do not make any contributions in the section.  Instead we give sufficient details of the computational model that we are using for the reader to understand the test for honesty.

Rather than using the standard circuit model of quantum computation, we will be using the measurement-based quantum computation model, introduced by Raussendorf et al.\ \cite{Raussendorf:2001:A-One-Way-Quant, Raussendorf:2003:Measurementbasedquantum}.  The basic idea is to use teleportation to apply unitaries.  By varying the measurements used in the teleportation one can induce various unitaries on the qubit being teleported.  Hence gates are replaced by measurements.  Of course teleportation requires correction gates, but these can be absorbed into the later behaviour of the circuit.  By carefully accounting for correction gates, one can perform an entire calculation using only measurements on a suitably prepared resource state.  This resource state depends only on the size of the circuit one wishes to perform and are examples of \emph{graph states}. 

\vspace{0.2cm}
\noindent
\textbf{Graph states - } 
We assume that the reader is familiar with the basics of graph theory.  A good resource is \cite{Diestel:2010:Graph-Theory}.  We now fix some notation.  Let $G = (V, E)$ be a graph, $n = |V|$ and $u, v \in V$. 
%
%
%
The \emph{adjacency matrix} $\mathbf{A}$ of $G$ is a $\{0,1\}$ matrix with $\mathbf{A}_{u,v} = 1$ whenever $(u,v) \in E$ and 0 elsewhere.  Setting $1_v$ to be the vector with 1 in position $v$ and zero elsewhere, the vector $\mathbf{A}1_{v}$ contains a 1 in position $u$ for each $(u,v) \in E$, and is hence the characteristic vector of the \emph{neighbourhood} of $v$, the set of vertices adjacent to $v$.  A \emph{triangle} is a set of three vertices, all of which are adjacent.

The graph state $\ket{G}$ on a graph $G$ is the $n$-qubit state (each qubit is associated with a vertex of $G$) given by
\begin{equation}
	\ket{G} = 
	\frac{1}{\sqrt{2^{n}}}
	\sum_{x \in\{0,1\}^{n}} 
		(-1)^{\frac{1}{2}x \cdot \mathbf{A}x}\ket{x}.
\end{equation}

We can equivalently specify $\ket{G}$ using the \emph{stabilizer formalism} \cite{Gottesman:1997:Stabilizer-Codes-and-Quantum-Error-Correction}.  Setting $X$, $Y$ and $Z$ to be the Pauli matrices
\begin{equation}
	X = 
	\left(
		\begin{matrix}
		0 & 1 \\
		1 & 0 \\
		\end{matrix}
	\right)
	,
	\quad
		Y = 
	\left(
		\begin{matrix}
		0 & -i \\
		i & 0 \\
		\end{matrix}
	\right)
	,
	\quad
	Z = 
	\left(
		\begin{matrix}
		1 & 0 \\
		0 & -1 \\
		\end{matrix}
	\right)
\end{equation}
the \emph{stabilizers} of $\ket{G}$ are the $n$-fold tensor product Pauli operators $P$ such that $P \ket{G} = \ket{G}$.  The stabilizers form an Abelian group, called the \emph{stabilizer group}, which has $2^n$ elements and is generated by a set of $n$ elements.  By specifying the stabilizer group we fix a 1-dimensional subspace and hence essentially fix a particular state.

Let us use the shorthand notation $M^{x} = \bigotimes_{j=1}^{n} M_{j}^{x_{j}}$, where $x$ is an $n$-bit string, and $M_j$ operates on the $j$-th subsystem.  Then the stabilizer group for $\ket{G}$ is generated by the operators
\begin{equation}
	S_{v} = X_{v} Z^{\mathbf{A}1_{v}}.
\end{equation}
That is, $S_{v}$ has $X$ on vertex $v$ and $Z$ on each of the neighbours of $v$.  The stabilizers are products of the generators, and have the form $(-1)^{\frac{1}{2} t \cdot \mathbf{A} t}X^tZ^{\mathbf{A}t}$ for some $n$-bit string $t$.  In particular, if $T$ is a triangle in $G$ with characteristic vector $\tau$ then the operator $-X^\tau Z^{\mathbf{A} \tau}$ is a stabilizer for $\ket{G}$.	

The graph states that we will be considering are \emph{triangular cluster states} which are graph states where the underlying graph is a triangular lattice such as in figure~\ref{figure:triangularlattice}.

\begin{figure}
\begin{center}
\begin{tikzpicture}[node distance=1cm]
	\GraphInit[vstyle=Simple]
	\Vertices[x=0, y=0]{line}{a0,b0,c0,d0,e0}
	\Vertices[x=0.5, y=1]{line}{a1,b1,c1,d1,e1}
	\Vertices[x=0, y=2]{line}{a2,b2,c2,d2,e2}
	\Vertices[x=0.5, y=3]{line}{a3,b3,c3,d3,e3}
	\Edges(a0,b0,c0,d0,e0)
	\Edges(a1,b1,c1,d1,e1)
	\Edges(a2,b2,c2,d2,e2)
	\Edges(a3,b3,c3,d3,e3)

	\Edges(a0,a1,a2,a3)
	\Edges(b0,b1,b2,b3)
	\Edges(c0,c1,c2,c3)
	\Edges(d0,d1,d2,d3)
	\Edges(e0,e1,e2,e3)
	
	\Edges(b0,a1,b2,a3)
	\Edges(c0,b1,c2,b3)
	\Edges(d0,c1,d2,c3)
	\Edges(e0,d1,e2,d3)
\end{tikzpicture}
\end{center}
\caption{A triangular lattice graph}
\label{figure:triangularlattice}
\end{figure}
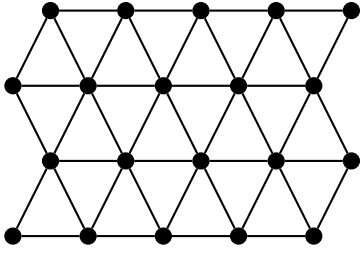

\vspace{0.2cm}
\noindent
\textbf{Triangular cluster state model - }
We will make use of the following theorem, due to Mahalla and Perdrix \cite{Mhalla:2012:Graph-States-Pi}, which allows us to transform any quantum circuit into a measurement-based quantum computation in a form which we can test.

\begin{theorem}[Mahalla and Perdrix \cite{Mhalla:2012:Graph-States-Pi}] \label{theorem:graphstatecomputation}
Triangular cluster states are universal resources for measurement-based computation based on measurements $X, Z, \frac{X\pm Z}{\sqrt{2}}$, and the number of vertices in the cluster state is polynomial in the size of the original circuit we wish to perform.
\end{theorem}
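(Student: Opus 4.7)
The approach I would take is to reduce the claim to the known universality of MBQC on standard 2D rectangular cluster states (with continuous-angle $X$-$Y$ plane measurements), and then transfer the gate implementations to the triangular lattice using only the four measurement bases $\{X,Z,\frac{X+Z}{\sqrt 2},\frac{X-Z}{\sqrt 2}\}$. Concretely, I would realize the universal set $\{H, S, T, CZ\}$ in measurement-based form on constant-size sub-patterns of the triangular lattice; this immediately gives the polynomial vertex-count bound, since each gate of the original circuit becomes an $O(1)$-vertex gadget.

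First I would note that the four available measurements correspond to angles $0, \pi/2, \pi/4, 3\pi/4$ in the $X$-$Z$ plane. After a Hadamard basis change these are exactly the angles at which standard rectangular-cluster MBQC implements $I$, $S$, $T$, and $H$ (up to signs), together with the Pauli measurements needed for readout and Clifford gates. So at the level of ``which angles are required,'' $\{X, Z, \frac{X \pm Z}{\sqrt 2}\}$ already suffices for a universal gate set, provided we can realize these primitives on a triangular (rather than rectangular) resource and handle the extra edges coming from the lattice's diagonals.

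Second I would describe the wiring. Logical qubits are encoded as paths of vertices in the lattice; a $CZ$ between two logical qubits arises from joining two parallel paths along a shared edge (these are available natively because of the triangular lattice's dense edge set); single-qubit rotations are implemented by sequential measurements along a path. The triangles in the graph play the crucial role: the extra stabilizer $-X^{\tau}Z^{\mathbf A \tau}$ noted in the excerpt provides precisely the $-1$ phase needed to turn the measurement $\frac{X+Z}{\sqrt 2}$ into a non-Clifford $T$-gate implementation, whereas on a bipartite lattice these four bases alone would only generate the Clifford group. I would catalogue a gadget for each element of $\{H, S, T, CZ\}$, using a constant-size patch of triangles and only the four permitted measurement bases.

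Third, correctness is verified by a Heisenberg-picture argument that tracks byproduct Pauli operators, which are compensated by adapting the basis of later measurements (feedforward). The main obstacle I expect lies here: in standard $X$-$Y$ plane MBQC a Pauli byproduct either commutes with the next measurement or merely flips the sign of the rotation angle, so the measurement plane is preserved. With $X$-$Z$ plane measurements, a $Y$ byproduct on a not-yet-measured qubit would rotate the next measurement out of the allowed plane, and none of the four bases $\{X,Z,\frac{X\pm Z}{\sqrt 2}\}$ could absorb it. Resolving this would require designing the gadget geometry and measurement order so that only $X$ and $Z$ byproducts ever accumulate on a qubit prior to its measurement, or invoking triangle stabilizer identities to rewrite a would-be $Y$ byproduct in terms of operators supported on already-measured vertices. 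This bookkeeping, rather than the high-level gate decomposition, is where I expect the real technical content of the theorem to reside.
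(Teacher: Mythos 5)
This theorem is not proved in the paper at all: it is imported verbatim from Mhalla and Perdrix \cite{Mhalla:2012:Graph-States-Pi}, and the paper explicitly states that it makes no contribution in that section. So there is no ``paper proof'' to compare against; your proposal has to stand on its own, and as written it has a genuine gap at its foundation. Your reduction in steps one and two rests on the claim that a Hadamard basis change turns the four observables $X, Z, \frac{X\pm Z}{\sqrt{2}}$ into the angles $0,\pi/2,\pi/4,3\pi/4$ of standard $X$-$Y$ plane cluster-state computation. But conjugation by $H$ maps the $X$-$Z$ plane to itself ($HXH=Z$, $HZH=X$), not to the $X$-$Y$ plane; any single-qubit rotation that does carry the $X$-$Z$ plane onto the $X$-$Y$ plane fails to preserve the graph state, so the standard $J(\theta)$ teleportation identities do not transfer. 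Indeed, a direct computation on a linear cluster shows that projecting a wire qubit onto an eigenstate of $\frac{X+Z}{\sqrt{2}}$ does not implement a clean unitary on the teleported qubit at all (one obtains a non-unitary filtering of the amplitudes), which is precisely why universality with $X$-$Z$ plane measurements is a nontrivial theorem requiring new gadgets and a different lattice geometry, rather than a relabelling of the rectangular-cluster construction.

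Two further points. First, your assertion that the triangle stabilizer $-X^{\tau}Z^{\mathbf{A}\tau}$ supplies ``the $-1$ phase needed to turn $\frac{X+Z}{\sqrt{2}}$ into a non-Clifford $T$-gate,'' and that on a bipartite lattice these four bases would generate only the Clifford group, is a misattribution: measurement of $\frac{X+Z}{\sqrt{2}}$ is already a non-Pauli, non-stabilizer measurement on any graph, and in this paper the triangle stabilizers are used for an entirely different purpose, namely deriving the anti-commutation of $X^{\prime}_v$ and $Z^{\prime}_v$ in the self-test of Lemma \ref{lemma:testproof}. The actual role of the triangular (as opposed to square) lattice in Mhalla--Perdrix is graph-theoretic, tied to how $X$ and $Z$ measurements transform the underlying graph (vertex deletion and pivoting) and to which grids arise as pivot-minors. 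Second, your worry about $Y$ byproducts leaving the allowed measurement plane is a legitimate and well-identified obstacle, but the two escape routes you sketch are speculative; without concrete gadgets showing that only $X$ and $Z$ byproducts accumulate, the correctness argument does not close. If you want to use this theorem, the right move here is the one the paper makes: cite \cite{Mhalla:2012:Graph-States-Pi} rather than attempt a from-scratch proof by analogy with $X$-$Y$ plane MBQC.
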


The algorithm for making a measurement-based computation in this model is given in Algorithm~\ref{algorithm:calculate}. The measurement settings for each qubit $M(L, x, a_1, \dots a_{v-1}) \in \{X, Z, \frac{X\pm Z}{\sqrt{2}}\}$ and the output RESULT$(L, x, a_1 \dots a_n)$ can be computed in polynomial time.  An important aspect of this algorithm is that the measurement settings are adaptive, meaning that the measurement setting for vertex $v$ can depend on the outcome of previously applied measurements.

\begin{algorithm}
\caption{CALCULATE$(L, x)$}
\label{algorithm:calculate}

\begin{algorithmic}
	\State Prepare a triangular cluster state of size $n = \text{poly}(|x|)$
	\For {$v=1$ to $n$} 
		\State Measure vertex $v$ in basis $M(L, x, a_{1} \dots a_{v-1})$ to obtain outcome $a_{v}$
	\EndFor
	\State Output RESULT$(L, x, a_1 \dots a_n)$.
\end{algorithmic}
\end{algorithm}

In order to put CALCULATE$(L,x)$ into a form that we can test, we distribute the $n$-qubit cluster state to $n$ provers, one vertex per prover.  Honest provers will accept a measurement setting in $\{X, Z, \frac{X\pm Z}{\sqrt{2}}\}$, measure their qubit in the specified basis, and return the result as $\pm 1$.  For the dishonest case, we will assume that the $n$ provers are non-communicating and are limited to quantum operations (no ``post-quantum'' operations allowed.)  We put no further restrictions on the provers, and in particular we make no assumptions about the dimension of the Hilbert space of their states.

\section{Test for honesty}
\label{sec:selftesting}

First, we introduce a bit of notation.   Honest provers will each hold one qubit corresponding to a vertex of the cluster state $\ket{G}$.  The honest prover for $v$ will measure its qubit according to observable $M$.  We will abuse notation slightly and also consider ``$M$'' to be the symbol that we use to query the prover when we want it to measure observable $M$.

For untrusted provers, the $n$-partite state $\ket{\psi^\prime}$ is the state that is actually held by the $n$ provers.  It is sufficient to consider pure states since we make no assumptions about the dimension of the Hilbert spaces, so we can give the purification of any mixed state to one of the provers.  The operator $M^\prime_v$ is the observable that the prover for vertex $v$ uses to measure its state when queried with measurement setting $M$.  The provers respond with $\pm 1$, and so we can model their actions as a two-outcome observable $M^\prime_v$ with eigenvalues $\pm 1$.  Hence $(M^\prime_v)^2 = I$ and $M^\prime_v$ is unitary.

The structure of the test for honesty is very straightforward and consists of choosing one of a set of measurement settings at random and accepting if the result matches what we expect from the honest case.  To determine the measurement settings that are part of the test we first determine the triangular lattice graph $G$ that is needed, which depends only on the size of the circuit we wish to perform.  Next, fix a set $T$ of triangles in $G$ so that each vertex of $G$ lies it at least one triangle.  Finally, for each vertex $v$ fix a neighbour $u$.  Then the set of measurement settings $\mathcal{M}(L, x)$ consists of:
\begin{itemize}
	\item the stabilizer generators $S_v = X_v Z^{\mathbf{A} 1_v}$ for the graph state $\ket{G}$
	\item stabilizers $-X^{\tau} Z^{\mathbf{A} \tau}$ where $\tau$ is the characteristic vector for a triangle in $T$
	\item measurements $\left(\frac{X \pm Z}{\sqrt{2}}\right)_v Z^{ A 1_{v}}$ and $ 
			\mp \left(\frac{X \pm Z}{\sqrt{2}}\right)_v
			X_{u}
			Z^{ A 1_{u} \oplus 1_{v}}
		 $ for each vertex $v$.
\end{itemize}
If the measurement setting is $X_v Z^{\mathbf{A} 1_v}$, for example, we query prover $v$ with $X$ and the provers for each neighbour of $v$ with $Z$.  The remaining provers are queried with $I$, meaning that we set their outcome to 1.  The combined outcome $a$ is found by multiplying together all of their $\pm 1$ responses.  The timing of these queries does not matter for the test, so we query provers according to the timing they would expect if we were running CALCULATE$(L,x)$.

When we run Algorithm~\ref{algorithm:test}, an individual prover will see only one measurement setting in $\{X, Z, \frac{X\pm Z}{\sqrt{2}}\}$.  Some subset of these measurement settings will be the ones that the prover would expect to see as part of CALCULATE$(L,x)$.  When the prover sees these settings it cannot tell whether we are running CALCULATE$(L,x)$ or TEST$(L,x)$ and so it can only choose some fixed behaviour for each of these measurement settings.  This fact allows us to draw conclusions about CALCULATE$(L,x)$ based on the results of TEST$(L,x)$.  On the other measurement settings the prover can tell that the test is being run, but this will not help it bias the outcome of the calculation.

\begin{algorithm}
\caption{TEST$(L, x)$}
\label{algorithm:test}
\begin{algorithmic}
	\State Pick a measurement settings $M$ uniformly at random from the set $\mathcal{M}(L, x)$
	\State Query provers with the measurement settings $M$ to obtain outcome $a$
	\If {$a = 1$}
		\State \Return ACCEPT
	\Else
		\State \Return REJECT
	\EndIf
\end{algorithmic}
\end{algorithm}

\begin{lemma}
\label{lemma:testproof}
Honest provers cause TEST$(L, x)$ to accept with probability $c_{test}$, and non-communicating quantum dishonest provers cause TEST$(L, x)$ to accept with probability no more than $s_{test}$ where
\begin{equation}
	c_{test} - s_{test} \geq \frac{1}{\text{poly}(|x|)}
\end{equation}

\end{lemma}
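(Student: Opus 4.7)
The plan is to compute $c_{test}$ directly from honest statistics on $|G\rangle$ and to upper-bound $s_{test}$ via a self-testing argument: any prover strategy with TEST-acceptance close to $c_{test}$ must be close, up to a local isometry, to the honest one, which then forces CALCULATE-acceptance to behave like the honest case and in particular prevents it from exceeding $\sfrac{1}{2}$ when $x \notin L$.

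For the honest case I would first partition $\mathcal{M}(L,x)$ into (i) the $n$ stabilizer generators $S_v$, (ii) the triangle stabilizers $-X^{\tau}Z^{\mathbf{A}\tau}$, and (iii) the $X$-$Z$-plane compound observables attached to each vertex $v$ and its chosen neighbour $u$. Families (i) and (ii) are stabilizers of $|G\rangle$, so honest provers return outcome $+1$ with certainty. For (iii) I would expand $\frac{X\pm Z}{\sqrt{2}}$ into Paulis, use the fact that $-X_u X_v Z^{\mathbf{A}(1_u \oplus 1_v)} = S_u S_v$ whenever $u,v$ are adjacent, and read off an expectation of magnitude $\tfrac{1}{\sqrt{2}}$. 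Averaging $\tfrac{1}{2}(1+\langle M\rangle_G)$ uniformly over $\mathcal{M}(L,x)$ then yields an explicit constant $c_{test}$ independent of $n$.

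For the dishonest case I would argue by contrapositive. Suppose a strategy on $|\psi'\rangle$ with observables $M'_v$ achieves TEST-acceptance $c_{test}-\delta$; by linearity the average discrepancy $|\langle M'\rangle_{\psi'}-\langle M\rangle_G|$ over $\mathcal{M}(L,x)$ is $O(\delta)$. I would feed this into the improved graph-state self-test, the paper's main technical contribution: there exists a local isometry $\Phi$ for which $\|\Phi(|\psi'\rangle)-|G\rangle\otimes|\text{junk}\rangle\|\le O(\sqrt{n})\,\delta^{1/4}$, and each $M'_v$ maps under $\Phi$ to within the same order of error of its ideal counterpart. The CHSH-like pairs attached to each $(v,u)$ promote rigidity from the Pauli to the $\frac{X\pm Z}{\sqrt{2}}$ observables, and the extended error analysis handles adaptivity. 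Propagating the closeness through the $n$ adaptive one-qubit measurements of CALCULATE$(L,x)$ gives $|\Pr[\text{provers accept CALCULATE}]-\Pr[\text{honest accept}]|\le \text{poly}(n)\,\delta^{1/4}$. Since honest CALCULATE-acceptance is at most $\sfrac{1}{3}$ when $x\notin L$, a dishonest strategy ($>\sfrac{1}{2}$ acceptance) would force $\text{poly}(n)\,\delta^{1/4}\ge \sfrac{1}{6}$, hence $\delta\ge 1/\text{poly}(n)$, giving the claimed $c_{test}-s_{test}\ge 1/\text{poly}(|x|)$.

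The hard part will be the quantitative self-test step. Using the earlier $O(2^{n/2})\,\epsilon^{1/2}$ bound from \cite{McKague:2010:Self-testing-gr} would collapse the gap exponentially in $n$; the argument relies critically on the improved $O(\sqrt{n})\,\epsilon^{1/4}$ bound and on its extension to $X$-$Z$-plane observables and adaptive measurement schedules. Once that analytic input is in place, the averaging, error propagation through a polynomial-size cluster-state computation, and the contrapositive are essentially routine.
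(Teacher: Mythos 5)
Your proposal follows essentially the same strategy as the paper: compute the honest acceptance probability $c_{test}$ from the stabilizer structure of $\ket{G}$, and bound the dishonest acceptance by a contrapositive routed through the robust graph-state self-test (anticommutation of $X'_v,Z'_v$ extracted from the triangle stabilizers, a local swap isometry identifying the logical qubits, extension of rigidity to the $\frac{X\pm Z}{\sqrt{2}}$ observables, and error propagation through the adaptive measurements of CALCULATE). The paper's printed proof works through only the zero-error special case and, like you, defers the quantitative $O(\sqrt{n})\,\epsilon^{1/4}$ robustness statement to the full version, so your outline is consistent with its argument.
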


\begin{proof}
The full robust proof is found in \cite{McKague:2013:Interactive-proofs-for-BQP-via-self-tested-graph-states}.  Here we give a sketch and in particular we consider only a very special case, where the provers' overall outcome $a$ for each measurement setting $M \in \mathcal{M}(L, x)$ is distributed exactly the same as for the honest provers.  The overview is:
\begin{itemize}
	\item Show that $X^\prime_v$ and $Z^\prime_v$ anti-commute.  This identifies a \emph{logical qubit} within the Hilbert space held by prover $v$, and shows that $X^\prime_v$ and $Z^\prime_v$ operate as the Pauli operators $X$ and $Z$ on this logical qubit.
	\item Show that the group generated by $S^\prime_v$ for various $v$ stabilizes the state $\ket{\psi^\prime}$, establishing that the state on the $n$ logical qubits is the required graph state $\ket{G}$.
	\item Show that the additional measurement operators $\left(\frac{X \pm Z}{\sqrt{2}}\right)^\prime_v$ actually behave like $\frac{X \pm Z}{\sqrt{2}}$ on the logical qubits.
\end{itemize}
Together, these show that the provers are behaving as honest provers would, and hence the calculation will produce a valid result.

First, let us be more specific about the case we are considering.  We suppose that 
\begin{equation}
\bra{\psi^\prime} M^\prime \ket{\psi^\prime} = \bra{G}M\ket{G}
\end{equation}
for each $M \in \mathcal{M}(L,x)$ so that the provers are indistinguishable from honest provers using these measurement settings. Our goal, then, is to show that they are still indistinguishable when we use any adaptive measurement.

First, since $\bra{G} S_v \ket{G} = 1 = \bra{\psi^\prime} S^\prime_v \ket{\psi^\prime}$ and $||S^\prime_v||_\infty = 1$ we have $S^\prime_v \ket{\psi^\prime} = \ket{\psi^\prime}$.  Similarly $-X^\tau Z^{\mathbf{A} \tau} \ket{\psi^\prime} = \ket{\psi^\prime}$ when $\tau$ is the characteristic vector of a triangle $\{u,v,w\}$ in $T$.  We can multiply operators together and still preserve $\ket{\psi^\prime}$, in particular
\begin{equation}
	-X^{\prime \tau} Z^{\prime \mathbf{A} \tau}
	S^\prime_u S^\prime_v S^\prime_w 
	\ket{\psi^\prime} = 
	-X^{\prime \tau} Z^{\prime \mathbf{A} \tau}
	X^{\prime}_u Z^{\prime \mathbf{A} 1_u}
	X^{\prime}_v Z^{\prime \mathbf{A} 1_v}
	X^{\prime}_w Z^{\prime \mathbf{A} 1_v}
	\ket{\psi^\prime}
		=
	\ket{\psi^\prime}
\end{equation}
Now consider a vertex $x \notin \{u,v,w\}$.  If $x$ is a neighbour of an even number of vertices in $\{u,v,w\}$ then $Z^\prime_x$ does not appear in $Z^{\prime \tau}$ and appears in an even number of $S^\prime_u$, $S^\prime_v$ and $S^\prime_w$.  If $x$ is a neighbour of an odd number of vertices in $\{u,v,w\}$ then $Z^\prime_x$ does appear in $Z^{\prime \tau}$ and appears in an odd number of $S^\prime_u$, $S^\prime_v$ and $S^\prime_w$.  In both cases, $Z^\prime_x$ commutes with all other operators and appears an even number of times, giving an overall factor of $I_x$.  Hence the above becomes
\begin{equation}
	-X^{\prime \tau}
	X^{\prime}_u Z^{\prime}_v Z^\prime_w
	X^{\prime}_v Z^{\prime}_u Z^\prime_w
	X^{\prime}_w Z^{\prime}_u Z^\prime_v
	\ket{\psi^\prime}
		=
	\ket{\psi^\prime}.
\end{equation}
Commuting operators on different subsystems past each other, we obtain
\begin{equation}
	-(X^{\prime}_u 
	X^{\prime}_u 
	Z^\prime_u
	Z^\prime_u)(
	Z^\prime_v
	X^{\prime}_v 
	Z^\prime_v
	X^{\prime}_v )(
	X^{\prime}_w
	Z^{\prime}_w Z^\prime_w
	X^{\prime}_w)
	\ket{\psi^\prime}
		=
	\ket{\psi^\prime}.
\end{equation}
Cancelling pairs of operators we finally find 
$	-Z^\prime_v
	X^{\prime}_v 
	Z^\prime_v
	X^{\prime}_v 
	\ket{\psi^\prime} =
	\ket{\psi^\prime}
$,
or equivalently
\begin{equation}
		-Z^\prime_v
	X^{\prime}_v
	\ket{\psi^\prime} = 
	X^{\prime}_v 
	Z^\prime_v
	\ket{\psi^\prime}
\end{equation}
By varying the order of multiplication and the triangle that we use, we can obtain the above relation for any vertex $v$ in the graph.

Now that we have a pair of anti-commuting operators we can use them to identify a logical qubit within the state space of the prover.  It is easiest to see this by moving the logical qubit out of the prover's state space into a new qubit.  The circuit in figure~\ref{fig:epr_local_unitary_circuit} shows how this is done.  The operators $X^\prime_v$ and $Z^\prime_v$ have eigenvalue $\pm 1$ and so are unitaries as are their controlled versions.  The controlled-$Z^\prime_v$ conjugated by Hadamard gates is equivalently a controlled-$X$ gate, targeted on the ancilla qubit.  Hence the circuit is analogous to the familiar SWAP gate and swaps the logical qubit into the new ancilla qubit.

\begin{figure}[h]
\[
\Qcircuit @C=0.5cm @R=0.5cm{
	 \lstick{\text{ancilla}} & \ctrl{1} & \gate{H} & \ctrl{1}  & \gate{H} & \ctrl{1} & \rstick{\text{logical qubit}} \qw  \\
	\lstick{\text{input}} & \gate{X^{\prime}_{v}} & \qw & \gate{Z^{\prime}_{v}} & \qw & \gate{X^{\prime}_{v}} & \rstick{\text{junk}} \qw\\
	}
\]
\caption{Swapping out the logical qubit for prover $v$}
\label{fig:epr_local_unitary_circuit}
\end{figure}
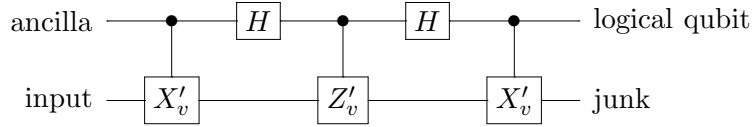

To take this one step further, suppose we apply $X^\prime_v$ before swapping and then move it past the other operators to the right.  The $X^\prime_v$ commutes with the first controlled-$X^\prime_v$ gate, but as we move it past the controlled-$Z^\prime_v$ gate it induces a phase flip on the ancilla qubit.  The two Hadamard gates convert this phase flip to a bit flip, $X$.  At the final controlled-$X^\prime_v$ gate the induced $X$ in turn induces a second $X^\prime_v$, cancelling with the first.  The induced $X$ on the ancilla remains.  Hence applying $X^\prime_v$ before swapping, on the hidden logical qubit, is the same as applying $X$ after swapping, on the swapped out logical qubit now residing in the ancilla.  A similar analysis shows that applying $Z^\prime_v$ before swapping is the same as applying $Z$ after swapping.

To summarize the progress so far, we have a logical qubit buried inside each prover's state space, and we can apply logical $X$ and $Z$ gates to these logical qubits by applying $X^\prime_v$ and $Z^\prime_v$.  Now recall that our provers' state also satisfies $S^\prime_v \ket{\psi^\prime} = \ket{\psi^\prime}$.  This just means that the $n$ provers' state is stabilized by the group generated by $S^\prime_v$.  Breaking up $S^\prime_v$ into $X^\prime_v Z^{\prime \mathbf{A} 1_v}$ and interpreting these as operations on the logical qubits, we see that in fact the state on the $n$ logical qubits is stabilized by the same group as our desired graph state.  Hence the logical qubits must be in that graph state.

Now that we have characterized the state and $X$ and $Z$ measurements, we next consider $\left(\frac{X \pm Z}{\sqrt{2}}\right)^\prime_v$.  First, the honest provers would satisfy $\bra{G}\frac{X_v \pm Z_v}{\sqrt{2}} Z^{\mathbf{A} 1_v}\ket{G} = \bra{G}\frac{X_v \pm Z_v}{\sqrt{2}} X_v \ket{G} = \sfrac{1}{\sqrt{2}}$.  We can see this by using the fact that $S_v \ket{G} = X_v Z^{\mathbf{A} 1_v} \ket{G}  = \ket{G}$ to replace $Z^{\mathbf{A} 1_v}\ket{G}$ with $X_v \ket{G}$.  Next, since $X$, $Z$ and $\ket{G}$ are all written using real entries $\bra{G} Z_v X_v \ket{G} = (\bra{G} Z_v X_v \ket{G})^* = \bra{G}X_v Z_v  \ket{G}$ and by anti-commutation $\bra{G}X_v Z_v  \ket{G} = -\bra{G} Z_v X_v \ket{G}$ so we must have $\bra{G} Z_v X_v \ket{G} = 0$.  So $\bra{G}\frac{X_v \pm Z_v}{\sqrt{2}} Z^{\mathbf{A} 1_v}\ket{G}$ becomes $\sfrac{1}{\sqrt{2}} \bra{G} (X^2_v + Z_v X_v) \ket{G} = \sfrac{1}{\sqrt{2}}$.  Similarly, the honest provers' measurements also satisfy
$ 
	\mp \bra{G}\frac{X_v \pm Z_v}{\sqrt{2}}
			X_{u}
			Z^{ A 1_{u} \oplus 1_{v}}\ket{G} = \sfrac{1}{\sqrt{2}}
$ by using $S_u$ in a similar substitution.

With the assumption that the provers' statistics for these tests match those of the honest provers, we use the analogous substitutions to above to find
$
\bra{\psi^\prime}
\left(\frac{X \pm Z}{\sqrt{2}}\right)^\prime_v X^\prime_v
\ket{\psi^\prime} = \sfrac{1}{\sqrt{2}}
$
and
$
\bra{\psi^\prime}
\left(\frac{X \pm Z}{\sqrt{2}}\right)^\prime_v Z^\prime_v
\ket{\psi^\prime} = \frac{1}{\sqrt{2}}
$
and since $X^\prime_v$ and $Z^\prime_v$ behave like $X$ and $Z$ on the logical qubits, we find 
$\bra{\psi^\prime} Z^\prime_v X^\prime_v \ket{\psi^\prime} = 0$.  Hence the state 
$\left(\frac{X \pm Z}{\sqrt{2}}\right)^\prime_v\ket{\psi^\prime}$
has overlap $\sfrac{1}{\sqrt{2}}$ with each of the two orthogonal states 
$Z^\prime_v
\ket{\psi^\prime}$ and 
$X^\prime_v \ket{\psi^\prime}$ meaning that 
\begin{equation}
	\left(\frac{X \pm Z}{\sqrt{2}}\right)^\prime_v\ket{\psi^\prime} = 
	\frac{X^\prime_v \pm Z^\prime_v}{\sqrt{2}} \ket{\psi^\prime}
\end{equation}
i.e.\ $\left(\frac{X \pm Z}{\sqrt{2}}\right)^\prime_v$ operates on the logical qubit as $\frac{X \pm Z}{\sqrt{2}}$.

Now we can see that the provers are essentially implementing the honest protocol.  Somewhere in their state there are logical qubits in the required graph state, and their measurements on this state are just $X$, $Z$ and $\frac{X \pm Z}{\sqrt{2}}$.  Hence any calculation we do with these resources will give the same results as for the honest provers.  This proves (for this limited case) that if the provers cause TEST$(L, x)$ to pass with the same probability as honest provers, then CALCULATE$(L, x)$ will produce a correct result.

\end{proof}


\section{Discussion}

Our construction has a somewhat remarkable property.  Since triangular cluster states are universal, the state preparation depends only on the size of the calculation.  The quantum provers are also constant, requiring only the ability to measure in some fixed set of bases.  This means that TEST$(L,x)$ also only depends on the size of the calculation, since it depends only on the state and the measurements.  The classical verifier is also rather simple.  It can be given a circuit as its input from which it reads off what gates to perform and simply looks up what angle to measure for that gate.  The remaining calculations are simply XORs.  This is a clear example of where the simplicity of the measurement-based quantum computing model allows for a simple analysis.

Another interesting property is that no single quantum prover has enough power to convince even \emph{itself} whether the input is in the language.  Indeed, all the quantum parts together, including the state preparation, still cannot perform even simple calculations since they can only prepare and measure some fixed state.  They lack the capacity to perform the XORs required to perform a full measurement-based calculation.  It is only when we combine the verifier, provers, and state preparation together that we obtain enough power to perform any substantial calculations.

Our construction, as with all other proposals to date, requires that the provers are quantum, meaning that we assume quantum mechanics applies to the provers.  If we are to use interactive proofs in experimentally verifying quantum mechanics then this may lead to a circular argument if we are not careful.  If we, assuming quantum mechanics is correct, perform an interactive proof with a quantum computer to make predictions about a large quantum experiment, and the experiment agrees with the predictions, is this evidence that quantum mechanics is correct?  Conceivably, both the computation and the experiment disagree with quantum mechanics but still agree with each other.  A much more satisfying result would be to have an interactive proof where we do not assume that the provers are quantum, but where there exist efficient quantum provers.  We can still use our results, however: we can view the interactive proof with the quantum provers as being \emph{itself} an experiment, i.e.\ we can form a prediction like ``if quantum mechanics is correct, then our interaction with the experiment should satisfy these properties...''  Significant deviation from the expected behaviour then signals a potential problem with the theory.

Currently our construction uses many constant sized provers, providing a nice complement to Reichardt et al.'s result using a constant number of provers.  Much of our result could easily be adapted to the case of two provers.  The most difficult part is the graph-state test.  Likely it is not possible to prove a self-testing theorem for two provers if there are any odd cycles in the graph since it would be necessary at some point to test the entanglement across an edge with both vertices held by a single prover.  However, bipartite graph states could yet be self-tested with two provers.  

\vspace{0.2cm}
\noindent
\textbf{Acknowledgements - }
Thanks to Serge Massar, Stefano Pironio, Iordanis Kerenidis, Fr\'{e}d\'{e}ric Magniez, David Hutchinson and Michael Albert for helpful discussions.  This work is funded by the Centre for Quantum Technologies, which is funded by the Singapore Ministry of Education and the Singapore National Research Foundation, and by the University of Otago.

\bibliographystyle{halphads}
\bibliography{Global_Bibliography}

\end{document}